\newtheorem{theorem}{Theorem}[section]
\newtheorem{lemma}[theorem]{Lemma}
\newtheorem{remark}[theorem]{Remark}
\theoremstyle{definition}
\newtheorem{definition}[theorem]{Definition}
\numberwithin{equation}{section}
\numberwithin{figure}{section}
\numberwithin{table}{section}
\newcommand{\wutilde}[1]{\vrule depth 0pt width 0pt%
{\raise0.8pt\hbox{$\smash{{\mathop{#1} \limits_{\displaystyle\widetilde{}}}}$}}}
\newcommand{\wuhat}[1]{\vrule depth 0pt width 0pt%
{\raise0.6pt\hbox{$\smash{{\mathop{#1} \limits_{\displaystyle\widehat{}}}}$}}}
\newcommand{\al}{\alpha}
\newcommand{\be}{\beta}
\newcommand{\de}{\delta}
\newcommand{\ga}{\gamma}
\newcommand{\la}{\lambda}
\newcommand{\PDE}{P$\Delta$E}
\newcommand{\bbZ}{\mathbb{Z}}
\newcommand{\bbC}{\mathbb{C}}
\long\def\@makecaption#1#2{
 \vskip 10pt
 \setbox\@tempboxa\hbox{#1. #2}
 \ifdim \wd\@tempboxa >\hsize #1. #2\par \else \hbox
to\hsize{\hfil\box\@tempboxa\hfil}
 \fi}
\newcommand{\orcidauthorA}{0000-0001-7504-4444}
\begin{document}
\allowdisplaybreaks

\title[3D Consistency of Hirota's dKdV equation]{On the three-dimensional consistency of Hirota's discrete Korteweg-de Vries Equation}
\author{Nalini Joshi}
\address{School of Mathematics and Statistics F07, The University of Sydney, NSW 2006, Australia.}
\thanks{Corresponding author: Nalini Joshi. NJ's ORCID ID is \orcidauthorA. Her research was supported by an
  Australian Research Council Discovery Projects \# DP160101728 and \#DP200100210.}
\email{nalini.joshi@sydney.edu.au}
\author{Nobutaka Nakazono}
\address{Institute of Engineering, Tokyo University of Agriculture and Technology, 2-24-16 Nakacho Koganei, Tokyo 184-8588, Japan.}
\thanks{NN's research was supported by a JSPS KAKENHI Grant Number JP19K14559.}
\email{nakazono@go.tuat.ac.jp}
\begin{abstract}
Hirota's discrete Korteweg-de Vries equation (dKdV) is an integrable partial difference equation on $\bbZ^2$, which approaches the Korteweg-de Vries equation in a continuum limit. 
We find new transformations to other equations, including a second-degree second-order partial difference equation, which provide an unusual embedding into a three-dimensional lattice. 
The consistency of the resulting system extends a property that has been widely used to study partial difference equations on multidimensional lattices.  
\begin{center}
\textit{Dedicated to Harvey Segur on the occasion of his 80th birthday.}
\end{center}
\end{abstract}

\subjclass[2020]{
37K10, 
39A14, 
39A45 
}
\keywords{
Korteweg-de Vries equation;
integrable systems;
partial difference equations;
lattice equations
}

\maketitle
\section{Introduction}\label{Introduction}
The main subject of this paper is the partial difference equation
\begin{equation}\label{eqn:dkdv_u_lm}
 u_{l+1,m+1}-u_{l,m}=\dfrac{~1~}{u_{l,m+1}}-\dfrac{~1~}{u_{l+1,m}},\quad
 u\in \bbC,~
 (l, m)\in\bbZ^2.
\end{equation}
This equation is interesting from at least two points of view: (i) it is an integrable discrete version of the Korteweg-de Vries equation and shares many of its famous properties; but, (ii) it does not satisfy a consistency property that has been widely used to characterise and classify many other integrable partial difference equations. 
We describe new observations that overcome this disjunction.

The nonlinear equation \eqref{eqn:dkdv_u_lm} is said to be integrable because it arises as a compatibility condition for a pair of linear partial difference equations, called a Lax pair, depending non-trivially on a spectral parameter. 
On the discrete lattice $\bbZ^2$, compatibility is equivalent to consistency, i.e., the result is independent of the order in which we compose the iterations $l\mapsto l+1$, $m\mapsto m+1$. (For the meaning of terms such as ``iteration" in relation to difference equations, see \cite[\S 1.3]{hietarinta2016discrete}.)
There exist transformations of integrable equations that give rise to iterations in additional directions independent of $(l, m)\in\bbZ^2$.
A natural question is to ask whether the resulting system of equations is consistent in such an extended multi-dimensional lattice.

We answer this question by providing new transformations of Equation \eqref{eqn:dkdv_u_lm} and showing that the resulting system of nonlinear partial difference equations (or, \PDE s) is consistent in a way that appears not to have been observed previously.
It is notable that the system composed of Equation \eqref{eqn:dkdv_u_lm} and its image \PDE s (the outputs arising from transformations) contains a second-degree {\PDE} (see Equation \eqref{eqn:intro_multiquadratic_1}).
We note that a system of second-degree {\PDE}s with two parameters, which contains a case that reduces to Equation \eqref{eqn:dkdv_u_lm}, was studied in \cite{atkinson2012multidimensionally}, from a different point of view concerning factorizability of discriminants.

There is a non-autonomous generalization of Equation \eqref{eqn:dkdv_u_lm} with similar properties. 
It is given by \cite{CNP1991:MR1111648,TGR2001:REMBLAY2001319,kajiwara2008bilinearization}
\begin{equation}\label{eqn:deauto_dkdv_u}
 u_{l+1,m+1}-u_{l,m}=\dfrac{q_{m+1}-p_l}{u_{l,m+1}}-\dfrac{q_m-p_{l+1}}{u_{l+1,m}},\quad
 (p_{l}, q_m)\in\bbC^2,
\end{equation}
where $p_l$, $q_m$ are parameters that vary with one independent variable, either $l$ or $m$ respectively. 
Equation \eqref{eqn:dkdv_u_lm} is given by the reduction $q_{m+1}-p_l=1$, $q_m-p_{l+1}=1$, 
which are satisfied by $p_l=a_0$, $q_m=a_0+1$, for an arbitrary constant $a_0$.

Equation \eqref{eqn:dkdv_u_lm} was proposed by Hirota \cite{HirotaR1977:MR0460934} as a difference analogue of the Korteweg-de Vries (KdV) equation
\begin{equation}\label{eqn:kdv}
 w_t+6\,w\,w_x+w_{xxx}=0,\quad 
 w\in\bbC,\,
 (x, t)\in\bbC^2,
\end{equation}
in the sense that the continuum limit of Equation \eqref{eqn:dkdv_u_lm} leads to the KdV equation (see \cite[\S 2]{HirotaR1977:MR0460934}).
Since Equation \eqref{eqn:dkdv_u_lm} is a special case of Equation \eqref{eqn:deauto_dkdv_u}, and they share similar properties, we refer interchangebly to one or the other as the discrete KdV (dKdV) equation, where there is no confusion. 

Equation \eqref{eqn:dkdv_u_lm} shares distinctive properties with the KdV equation. 
It possesses $N$-soliton solutions, where $N\in \mathbb N\backslash \{0\}$. 
There exist transformations of the dependent variable $u_{l,m}$ that are discrete analogues of B\"acklund transformations of the KdV equation. 
The new transformations we provide are new B\"acklund transformations. 
Equation \eqref{eqn:dkdv_u_lm} has a Lax pair, which gives rise to a discrete inverse scattering transform methodology. 
Our results provide a new way of deducing a Lax pair. 

There is a graphical interpretation of consistency that provides a simple geometric way of studying the evolution of the dKdV equation (see \S1.3 of \cite{hietarinta2016discrete}). 
Consider each iterate $\{u_{l,m}$, $u_{l+1,m}$, $u_{l,m+1}$, $u_{l+1,m+1}\}$ as the value assigned to a vertex of a unit cell in a lattice isomorphic to $\bbZ^2$; see Figure \ref{fig:square}. 
(In the non-autonomous case, we assign the parameter $p_l$ to the edge connecting $u_{l,m}$ with $u_{l+1,m}$ and $q_m$ to the edge connecting $u_{l,m}$ with $u_{l,m+1}$.)
\begin{figure}[htbp]
 \begin{center}
 \includegraphics[width=0.4\textwidth]{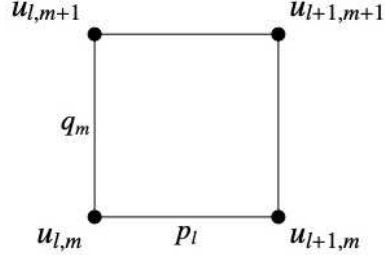}
 \end{center}
\caption{The iterates $u_{l,m}$ and parameters $(p_l, q_m)$ assigned to a unit cell of $\bbZ^2$.}
\label{fig:square}
\end{figure}
Then, the iteration $l\mapsto l+1$ is equivalent to mapping the initial cell to an adjacent cell in the horizontal direction in Figure \ref{fig:square}, while iterating $m\mapsto m+1$ is equivalent to moving in the vertical direction.

We now extend this two-dimensional lattice to a third dimension in the following way (see \S 3.2 of \cite{hietarinta2016discrete} for mathematical details).
Given a unit cell as shown in Figure \ref{fig:square}, take a unit cell in an independent lattice and identify two of its vertices, and the edge joining these two, with those in the original cell. 
We interpret the remaining edges in the new cell as lying along a third direction, independent of the two generating the original lattice. 
For the non-autonomous equation, we also require a new parameter that evolves in the new direction.

This generates a three-dimensional lattice given by $\bbZ^3$, with unit cells given by cubes. 
If we are given a value at a vertex lying in the third direction, then the corresponding three-dimensional initial value problem for the \PDE\ starts with four initial values. 
If we assume that the \PDE\ is given on each face of the cube, then we can find the values at every vertex of the resulting cube by solving the \PDE.
 
While this construction is straightforward, there are possible inconsistencies that may arise because a vertex value on the cube may be calculated in three different ways, corresponding to the three different faces meeting at that vertex. 
Lattice equations that satisfy the resulting consistency conditions are said to be \textit{consistent around a cube} (CAC). 
Equations satisfying the CAC property have been studied and classified under certain conditions \cite{NW2001:MR1869690,ABS2003:MR1962121}. 
But, as mentioned above, the dKdV equation is known to not satisfy these consistency conditions \cite[\S 3.8.1]{hietarinta2016discrete}. 

The main new result of the current paper is to demonstrate a different construction by which the dKdV equation is consistently embedded in a higher dimensional lattice $\bbZ^3$. 
Although the lattice structure we discover is different to the one conventionally used in the literature, we show that it is not unique to the dKdV equation $\--$ the same type of structure arises in at least one other integrable \PDE, known as the lattice sine-Gordon equation\cite{volkov1992quantum,bobenko1993discrete}:
\begin{equation}\label{eqn:redbook390_u}
 \dfrac{~u_{l+1,m+1}~}{u_{l,m}}
 =\dfrac{(\ga u_{l,m+1}-1)(\ga-u_{l+1,m})}{(\ga-u_{l,m+1})(\ga u_{l+1,m}-1)},\quad
 \ga\in\bbC.
\end{equation}

The list of equations which have the CAC property is given in \cite{ABS2003:MR1962121,ABS2009:MR2503862,BollR2011:MR2846098,BollR2012:MR3010833}.
As remarked earlier, the dKdV equation \eqref{eqn:dkdv_u_lm} is not in this list. Similarly,  
the non-autonomous dKdV equation \eqref{eqn:deauto_dkdv_u}
and 
the lattice sine-Gordon equation \ref{eqn:redbook390_u}
do not appear in this list.
The relation between Equations \eqref{eqn:dkdv_u_lm} and \eqref{eqn:redbook390_u} with equations in the list can be found in \cite[\S\S 3.8.1, 3.8.3]{hietarinta2016discrete},
while the relation for Equation \eqref{eqn:deauto_dkdv_u} is given in \cite{CNP1991:MR1111648}.
\subsection{Multi-quadratic equations} Our results involve new transformations to multi-\-quad\-ratic equations, involving a parameter $\lambda\in\mathbb C$, which are listed below.
The first group of transformations given by Equations \eqref{eqns:dkdv_uv_P234} maps the autonomous dKdV equation \eqref{eqn:dkdv_u_lm} to
\begin{equation}\label{eqn:intro_multiquadratic_1}
\begin{split}
 &\left(v_{l,m}\,v_{l+1,m}-v_{l,m+1}\,v_{l+1,m+1}\right)^2\\
 &\qquad +\bigl(v_{l,m}-v_{l+1,m+1}\bigr)\bigl(v_{l+1,m}-v_{l,m+1}\bigr)\bigl(\lambda-v_{l,m}\,v_{l+1,m}\bigr)\bigl(\lambda-v_{l,m+1}\,v_{l+1,m+1}\bigr)=0.
\end{split}
\end{equation}
The second group of transformations (see Equations \eqref{eq:fulltransformations}) maps the non-autonomous dKdV equation \eqref{eqn:deauto_dkdv_u} to 
\begin{equation}\label{eqn:intro_multiquadratic_2}
\begin{split}
 &q_m\left(v_{l,m}\,v_{l+1,m}-v_{l,m+1}\,v_{l+1,m+1}\right)^2\\
 &\quad +\bigl(v_{l,m}-v_{l+1,m+1}\bigr)\bigl(v_{l+1,m}-v_{l,m+1}\bigr)\bigl(\lambda-v_{l,m}\,v_{l+1,m}\bigr)\bigl(\lambda-v_{l,m+1}\,v_{l+1,m+1}\bigr)\\
 &\quad +\bigl(p_{l+1}v_{l,m}-p_lv_{l+1,m+1}\bigr)\bigl(p_lv_{l+1,m}-p_{l+1}v_{l,m+1}\bigr)\\
 &\quad -(p_l+p_{l+1})\bigl(\lambda\,v_{l,m}\,v_{l+1,m}+\lambda\,v_{l,m+1}\,v_{l+1,m+1}-2\,v_{l,m}\,v_{l+1,m}\,v_{l,m+1}\,v_{l+1,m+1}\bigr)\\
 &\quad+\bigl(p_{l+1}\,v_{l,m}\,v_{l,m+1}+p_l\,v_{l+1,m}\,v_{l+1,m+1}\bigr)\bigl(2\lambda - v_{l,m} v_{l+1,m}-v_{l,m+1} \,v_{l+1,m+1}\bigr) =0.
\end{split}
\end{equation}
The third group of transformations (see Equations \eqref{eq:redbooktransf}) maps the lattice sine-Gordon equation \eqref{eqn:redbook390_u} to 
\begin{equation}\label{eqn:intro_multiquadratic_3}
\begin{split}
 &\gamma^2\left(v_{l,m}-\,v_{l+1,m+1}\right)\bigl(v_{l+1,m}-v_{l,m+1}\bigr)\bigl(\lambda+v_{l,m}v_{l+1,m}\bigr)\bigl(\lambda+v_{l,m+1}v_{l+1,m+1}\bigr)\\
 &\quad -\,(1-\gamma^2)(\gamma^2-\,\lambda)\bigl(v_{l,m}v_{l+1,m}-\,v_{l,m+1}v_{l+1,m+1}\bigr)^2\\
 &\quad +\gamma(1-\gamma^2)\bigl(v_{l,m}v_{l+1,m}-\,v_{l,m+1}v_{l+1,m+1}\bigr)\\
 &\qquad \times\, \Bigl(\bigl(\lambda+v_{l,m}v_{l+1,m}\bigr)\bigl( v_{l,m+1}+v_{l+1,m+1}\bigr) -\,\bigl(v_{l,m}+v_{l+1,m}\bigr)\bigl(\lambda+v_{l,m+1}v_{l+1,m+1}\bigr)\Bigr)=0.
\end{split}
\end{equation}
\subsection{Notation and Terminology}\label{subsection:notation_definitions}
For conciseness in the remainder of the paper, we adopt the following notation:
\begin{equation}\label{eq:pde terminology_1}
 u=u_{l,m},\quad
 \overline u=u_{l+1,m},\quad
 \widetilde u=u_{l,m+1},\quad
 \underline u=u_{l-1,m},\quad 
 \wutilde u=u_{l,m-1},\quad
 \widetilde{\overline{u}}=u_{l+1,m+1},
\end{equation}
and extend the notation to $p_l$, $q_m$ and other iterates of $u$ as needed. 

We write each lattice equation as the vanishing condition of a polynomial of four variables.
For example, the dKdV equation is given by $Q(u,\overline{u},\widetilde{u},\widetilde{\overline{u}})=0$, where
\[
 Q(u,\overline{u},\widetilde{u},\widetilde{\overline{u}})
 =\overline{u}\,\widetilde{u}(\widetilde{\overline{u}}-u)-(q_{m+1}-p_l)\overline{u}+(q_m-p_{l+1})\widetilde{u}.
\]
(Where convenient, we also use lattice equations in their equivalent rational forms.)
Note that, for conciseness, we omit the dependence of the polynomial $Q$ on parameters. 
We assume that any parameters in the polynomial take generic values and that the corresponding polynomial is irreducible. 
Where the corresponding polynomial is linear in each variable, we describe it as an affine linear polynomial. 
Where the polynomial is quadratic in each variable, we refer to it as a multi-quadratic polynomial. 

Because of the association with a quadrilateral of $\bbZ^2$, see Figure \ref{fig:square}, a lattice equation relating four vertex values is called a \textit{quad-equation}. 
By a small abuse of terminology, we will also refer to the corresponding function, whose vanishing condition gives the lattice equation, as a quad-equation.

\subsection{Outline of the paper}
In Section \ref{s:3D}, we describe a new way of embedding a lattice equation in $\bbZ^3$, which differs from the conventional one used for the CAC property. 
This process is applied to the autonomous dKdV equation \eqref{eqn:dkdv_u_lm} in Section \ref{section:props_dkdv}, where we also show how to deduce a Lax pair for this equation and how it is related to Equation \eqref{eqn:intro_multiquadratic_1}. 
Finally, in Section \ref{s:fulldkdv}, we show how to extend the construction to the non-autonomous dKdV equation \eqref{eqn:deauto_dkdv_u} and to the lattice sine-Gordon equation \eqref{eqn:redbook390_u}.

\section{Embedding into three-dimensions}\label{s:3D}
In this section, we describe a way to embed the dKdV equation in $\bbZ^3$. 
To each elementary cubic cell in $\bbZ^3$, we associate 8 variables denoted by
\[
(u_0,u_1,u_2,u_{12},v_0,v_1,v_2,v_{12})\in\bbC^8,
\]
and assign each variable to a vertex of the cube.

In contrast to the usual procedure assumed for proving the CAC property, we do not assign a quad-equation to each face of the cube. 
Instead, we describe a system of equations on the cube, which may (i) vary with each face; (ii) become a triangular equation, i.e., those relating only three vertex values, on certain faces; and, (iii) involves vertices of a quadrilateral given by an interior diagonal slice of the cube. 

Three of the quad-equations occur on the bottom, front and back faces of the cube, while the fourth one occurs in the interior of the cube as a diagonal slice. 
Each triangular domain occurs as a half of the left or right face of the cube. 
See Figure \ref{fig:cubeABCS}. 
We will refer to this configuration as a \textit{broken cube}.
\begin{figure}[H]
\begin{center}
 \includegraphics[width=0.4\textwidth]{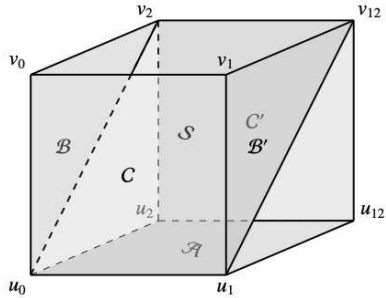}
\end{center}
\caption{A cube with three quadrilateral faces labelled by ${\mathcal A}$, ${\mathcal C}$ and ${\mathcal C}'$, an interior diagonal quadrilateral labelled by ${\mathcal S}$ and triangular domains labelled as ${\mathcal B}$ and ${\mathcal B}'$. Note that primes denote domains on parallel faces.}
\label{fig:cubeABCS}
\end{figure}
Correspondingly, we define polynomials of 4 variables $\mathcal A$, $\mathcal S$, $\mathcal C$, $\mathcal C'$ $: \bbC^4\to \bbC$ and those of 3 variables $\mathcal B$, $\mathcal B'$ $:\bbC^3\to \bbC$, such that $\mathcal B$ and $\mathcal B'$ written as functions of $(x, y, z)$ satisfy
\begin{enumerate}[leftmargin=0.8cm,label={\rm(\roman*)}]
\item 
$\deg_x{\mathcal B}\geq1$,~ 
$\deg_y{\mathcal B}=\deg_z{\mathcal B}=1$;
\item
the equation ${\mathcal B}=0$ can be solved for $y$ and $z$, and each solution is a rational function of the other two arguments.
\end{enumerate}
With the labelling of vertices given in Figure \ref{fig:cubeABCS}, we denote the system of six corresponding equations by
\begin{subequations}\label{eq:face-eqns_localABCS}
\begin{align}
 &{\mathcal A}(u_0,u_1,u_2,u_{12})=0,
 &&{\mathcal S}(u_0,u_1,v_2,v_{12})=0,\\
 &{\mathcal B}(u_0,v_0,v_2)=0,
 &&{\mathcal B}'(u_1,v_1,v_{12})=0,\\
 &{\mathcal C}(u_0,u_1,v_0,v_1)=0,
 &&{\mathcal C}'(u_2,u_{12},v_2,v_{12})=0.
\end{align}
\end{subequations}
The following definition describes how consistency holds for this system of equations.

\begin{definition}[CABC property]\label{def:CABC_local}\rm
Let $\{u_0,u_1,u_2,v_0\}$ be given initial values. 
Using Equations \eqref{eq:face-eqns_localABCS},
we can express the variable $v_{12}$ as a rational function in terms of the initial values in 3 ways.
When the 3 results for $v_{12}$ are equal, the system of Equations \eqref{eq:face-eqns_localABCS} is said to be \emph{consistent around a broken cube} or to have the \emph{consistency around a broken cube} (CABC) property.
In this case, we refer to the configuration of quadrilaterals and triangular domains associated with the polynomials $\mathcal A$, $\mathcal S$, $\mathcal C$, $\mathcal C'$, $\mathcal B$, $\mathcal B'$ as a {\it CABC cube}.
\end{definition}

\begin{remark}
Here is a brief explanation of how $v_{12}$ is calculated.
Using ${\mathcal A}=0$, ${\mathcal B}=0$ and ${\mathcal C}=0$ we can respectively express $u_{12}$, $v_2$ and $v_1$ as rational functions in terms of the initial values.
Then, using the remaining three equations we can express $v_{12}$ as a rational function in terms of the initial values in 3 ways.
Alternatively, $v_{12}$ can also be expressed as follows.
Using ${\mathcal A}=0$ and ${\mathcal C}'=0$ we can obtain ${\mathcal S}=0$.
Then, we now essentially have five relations, not six.
The method to determine $u_{12}$, $v_2$ and $v_1$ is the same as above. 
However, there are two ways to determine $v_{12}$ afterwards, using ${\mathcal B}'=0$ and ${\mathcal C}'=0$.
\end{remark}

Other equations arise from interrelationships between the above equations on the broken cube. 
For example, we show in the next section that Equation \eqref{eqn:intro_multiquadratic_1} arises on the top face, parallel to $\mathcal A$. 
It is also useful to note equations that relate three vertices on a face to a vertex on the opposite face. 
The following definition of such equations uses terminology analogous to existing ones in the literature on the CAC property. 

\begin{definition}[Tetrahedron property]\label{def:tetrahedron_local}\rm
A CABC cube is said to have a \emph{tetrahedron property},
if there exist quad-equations ${\mathcal K}_1$ and ${\mathcal K}_2$ satisfying
\begin{equation}
 {\mathcal K}_1(u_0,u_1,v_0,v_{12})=0,\quad
 {\mathcal K}_2(u_0,u_1,v_1,v_2)=0.
\end{equation}
In this case, each of the equations ${\mathcal K}_1=0$ and ${\mathcal K}_2=0$ is referred to as a \emph{tetrahedron equation}.
\end{definition}

\begin{remark}
The above description of the broken cube (or the CABC property) and its iteration in three-dimensional space cannot be replaced by affine linear transformations of the standard cubic lattice. 
It is possible to reflect a broken cube around a horizontal plane through its centre and place a copy of it above the original one to create a vertical column of alternating broken cubes. 
Stacking adjacent such vertical columns side-by-side produces a structure like that shown in Figure \ref{fig:stacks}. 
However, it is not possible to rotate each alternating column around its vertical axis of symmetry to create diagonal unit cubes from the triangular equations, because the iteration of the above polynomials does not have the required symmetry.
\end{remark}

\begin{figure}[H]
\begin{center}
 \includegraphics[width=0.4\textwidth]{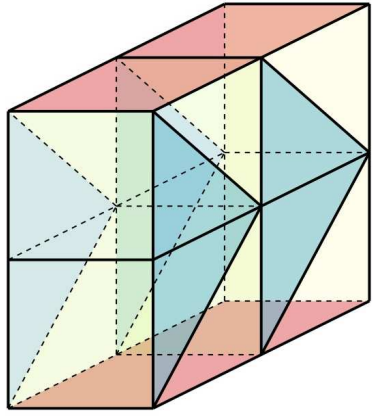}
\end{center}
\caption{A stacking of vertically alternating broken cubes.}
\label{fig:stacks}
\end{figure}

By interpreting each vertex value as an iterate of a function in an appropriate way, we can interpret the above equations as \PDE s. 
In particular, we use the terminology given in Equation \eqref{eq:pde terminology_1} for $u_{l,m}$ and extend it to $v_{l,m}$ to give the following definition of \PDE s.

\begin{definition}[CABC and tetrahedron properties for a system of {\PDE}s]
Define the \PDE s
\begin{equation}\label{eqn:general_P1234}
 {\mathsf A}\bigl(u,\overline{u},\widetilde{u},\widetilde{\overline{u}}\bigl)=0,\quad
 {\mathsf S}\big(u,\overline{u},\widetilde{v},\widetilde{\overline{v}}\big)=0,\quad
 {\mathsf B}\big(u,v,\widetilde{v}\big)=0,\quad
 {\mathsf C}\big(u,\overline{u},v,\overline{v}\big)=0,
\end{equation}
which give the following equations around each elementary cubic cell in $\bbZ^3$:
\begin{subequations}\label{eqn:general_ASBBCC}
\begin{align}
 &\mathcal A={\mathsf A}\bigl(u,\overline{u},\widetilde{u},\widetilde{\overline{u}}\bigl)=0,
 &&\mathcal S={\mathsf S}\big(u,\overline{u},\widetilde{v},\widetilde{\overline{v}}\big)=0,\\
 &{\mathcal B}={\mathsf B}\big(u,v,\widetilde{v}\big)=0,
 &&{\mathcal B}'={\mathsf B}\big(\overline{u},\overline{v},\widetilde{\overline{v}}\big)=0,\\
 &{\mathcal C}={\mathsf C}\big(u,\overline{u},v,\overline{v}\big)=0,
 &&{\mathcal C}'={\mathsf C}\big(\widetilde{u},\widetilde{\overline{u}},\widetilde{v},\widetilde{\overline{v}}\big)=0.
\end{align}
\end{subequations}
Here, we have used the terminology given in Equation \eqref{eq:pde terminology_1}.
Then, the system \eqref{eqn:general_P1234} is said to have the CABC property if Definition \ref{def:CABC_local} holds for the equations \eqref{eqn:general_ASBBCC}.
We also transfer the definition of tetrahedron properties to \PDE s corresponding to $\mathcal K_j$, $j=1,2$, in the obvious way. 
Moreover, the {\PDE} ${\mathsf A}\bigl(u,\overline{u},\widetilde{u},\widetilde{\overline{u}}\bigl)=0$ will be described as having the CABC property, if the system \eqref{eqn:general_P1234} has the CABC property.
\end{definition}

\begin{remark}
Note that Equations \eqref{eqn:general_P1234} are not necessarily autonomous. 
They may contain parameters that evolve with $(l, m)$.
\end{remark}

\section{Lattice structure of Equation \eqref{eqn:dkdv_u_lm}}
\label{section:props_dkdv}
In this section, we show that Equation \eqref{eqn:dkdv_u_lm} has the CABC property, by deducing the corresponding equations on a broken cube in a three-dimensional lattice.
The main advantage of this embedding is that it enables us to construct its Lax pair. 
We give this construction below. 
We also show that the multi-quadratic quad-equation \eqref{eqn:intro_multiquadratic_1} arises from the CABC property. 

We start by defining equations on the broken cube:
\begin{subequations}\label{eqns:dkdv_uv_P234}
\begin{align}
 &\mathsf A\bigl(u,\overline{u},\widetilde{u},\widetilde{\overline{u}}\bigl)
 =\widetilde{\overline{u}}-u-\dfrac{~1~}{\widetilde{u}}+\dfrac{~1~}{\overline{u}}=0,\label{eqn:dkdv_uv_P1}\\
 &\mathsf S\big(u,\overline{u},\widetilde{v},\widetilde{\overline{v}}\big)
 =u-\widetilde{\overline{v}}-\dfrac{~1~}{\overline{u}}+\dfrac{~\la~}{\widetilde{v}}=0,\label{eqn:dkdv_uv_P2}\\
 &\mathsf B\big(u,v,\widetilde{v}\big)
 =(u-v)\Big(\dfrac{~1~}{u}+\widetilde{v}\Big)-1+\la=0,\label{eqn:dkdv_uv_P3}\\
 &\mathsf C\big(u,\overline{u},v,\overline{v}\big)
 =\dfrac{~1~}{u}-\dfrac{~\la~}{v}-\overline{u}+\overline{v}=0,\label{eqn:dkdv_uv_P4}
\end{align}
\end{subequations}
where, as before, we have used the terminology given in Equation \eqref{eq:pde terminology_1}. 

It is straightforward to confirm that Definition \ref{def:CABC_local} holds when $\mathcal A$, $\mathcal B$, $\mathcal B'$, $\mathcal S$, $\mathcal C$ and $\mathcal C'$ are defined by \eqref{eqn:general_ASBBCC}.
Moreover, the tetrahedron properties hold. 
More explicitly, we have 
\begin{subequations}\label{eqn:eqnsASBC_dkdv}
\begin{align}
 &{\mathcal A}
 =\widetilde{\overline{u}}-u-\dfrac{~1~}{\widetilde{u}}+\dfrac{~1~}{\overline{u}}=0,
 &&{\mathcal S}
 =u-\widetilde{\overline{v}}-\dfrac{~1~}{\overline{u}}+\dfrac{~\la~}{\widetilde{v}}=0,\\
 &{\mathcal B}
 =\Big(u-v\Big)\Big(\dfrac{~1~}{u}+\widetilde{v}\Big)-1+\la=0,
 &&{\mathcal B}'
 =\Big(\overline{u}-\overline{v}\Big)\Big(\dfrac{~1~}{\overline{u}}+\widetilde{\overline{v}}\Big)-1+\la=0,\\
 &{\mathcal C}
 =\dfrac{~1~}{u}-\dfrac{~\la~}{v}-\overline{u}+\overline{v}=0,
 &&{\mathcal C}'
 =\dfrac{~1~}{\widetilde{u}}-\dfrac{~\la~}{\widetilde{v}}-\widetilde{\overline{u}}+\widetilde{\overline{v}}=0,
\end{align}
\end{subequations}
while the tetrahedron equations are given by 
\begin{equation}
 {\mathcal K}_1
 =\Big(\,\widetilde{\overline{v}}+\dfrac{~1~}{\overline{u}}\Big)\Big(\dfrac{~1~}{u}-\dfrac{~\la~}{v}\Big)-1+\la=0,\qquad
 {\mathcal K}_2
 =\Big(\dfrac{~\la~}{\widetilde{v}}+u\Big)\Big(\overline{u}-\overline{v}\Big)-1+\la=0.
\end{equation}
Hence, the following theorem holds.
\begin{theorem}
Equation \eqref{eqn:dkdv_u_lm} has the CABC property.
\end{theorem}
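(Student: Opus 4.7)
The plan is to verify Definition~\ref{def:CABC_local} directly for the system \eqref{eqn:eqnsASBC_dkdv}, following the first method outlined in the Remark after that definition. Given initial data $\{u_0,u_1,u_2,v_0\}$, I first invert ${\mathcal A}=0$, ${\mathcal B}=0$, ${\mathcal C}=0$ to obtain explicit rational expressions for $u_{12}$, $v_2$, $v_1$ respectively. Each of these equations is affine-linear or of low rational degree in the target variable, so this step is elementary and yields
\[
 u_{12}=u_0-\frac{1}{u_1}+\frac{1}{u_2},\qquad
 v_2=\frac{v_0-\lambda u_0}{u_0(u_0-v_0)},\qquad
 v_1=u_1-\frac{1}{u_0}+\frac{\lambda}{v_0}.
\]

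The second step is to compute $v_{12}$ along each of the three remaining relations, giving
\[
 v_{12}^{({\mathcal B}')}=\frac{v_1-\lambda u_1}{u_1(u_1-v_1)},\qquad
 v_{12}^{({\mathcal C}')}=u_{12}-\frac{1}{u_2}+\frac{\lambda}{v_2},\qquad
 v_{12}^{({\mathcal S})}=u_0-\frac{1}{u_1}+\frac{\lambda}{v_2},
\]
and show they agree. A useful short-cut is that the polynomial difference $v_{12}^{({\mathcal C}')}-v_{12}^{({\mathcal S})}$ equals exactly ${\mathcal A}$, so these two expressions coincide automatically once ${\mathcal A}=0$ is enforced. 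Thus only the single identity $v_{12}^{({\mathcal B}')}=v_{12}^{({\mathcal S})}$ carries substantive content.

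This remaining identity is verified by substituting the formulas for $v_1$ and $v_2$ into both sides, clearing denominators, and cancelling common factors; both expressions reduce to
\[
 \frac{(1-\lambda)u_0u_1v_0-(v_0-\lambda u_0)}{u_1(v_0-\lambda u_0)},
\]
which completes the proof that the system has the CABC property. The same substitutions, applied to ${\mathcal K}_1$ and ${\mathcal K}_2$, reduce each tetrahedron relation to $(1-\lambda)=(1-\lambda)$, since $1/u_0-\lambda/v_0=u_1-v_1$ by $\mathcal C$ and $\lambda/v_2+u_0 = \lambda/v_2+u_0$ can be paired with $\bar u-\bar v$ via $\mathcal S$ and $\mathcal B'$.

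The only potential obstacle is bookkeeping in the rational simplification; there is no hidden compatibility condition to unearth, because the ``broken cube'' configuration \eqref{eqn:eqnsASBC_dkdv} has been designed precisely so that ${\mathcal C}'-{\mathcal S}={\mathcal A}$ as polynomials, and so that the ${\mathcal B}'$-route agrees with the ${\mathcal S}$-route as a polynomial identity in the parameter $\lambda$. In effect, the work of the theorem lies in the construction of the system \eqref{eqn:eqnsASBC_dkdv} itself; once that system is written down, the verification is mechanical.
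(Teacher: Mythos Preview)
Your proposal is correct and follows the same approach as the paper: direct verification of Definition~\ref{def:CABC_local} for the system~\eqref{eqn:eqnsASBC_dkdv}. The paper itself does not spell out any details, stating only that ``it is straightforward to confirm'' the property; your computations (which I have checked) supply exactly that confirmation, including the neat observation that $v_{12}^{(\mathcal C')}-v_{12}^{(\mathcal S)}=\mathcal A$ as rational expressions, so only the $\mathcal B'$-versus-$\mathcal S$ comparison carries content. The closing remarks on the tetrahedron relations are tangential to the theorem as stated (which asserts only the CABC property) and are phrased a bit loosely, but they do not affect the validity of the main argument.
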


Now we show how to construct a Lax pair for the dKdV equation from the above system of equations, through a method that parallels the well-known method for constructing the Lax pair using the CAC property\cite{BS2002:MR1890049,NijhoffFW2002:MR1912127,WalkerAJ:thesis}.

To carry this out, we represent the auxiliary function $v$ by
\begin{equation}\label{eqn:dkdv_vFG}
 v=\dfrac{F}{G},
\end{equation}
where $F=F_{l,m}$ and $G=G_{l,m}$. 
Substituting this into the equations \eqref{eqn:dkdv_uv_P3} and \eqref{eqn:dkdv_uv_P4},
 separating the numerators and denominators of the resulting equations, and using the 2-vector $\Psi=\Psi_{l,m}$ defined by
\begin{equation}
 \Psi=\begin{pmatrix}F\\G\end{pmatrix},
\end{equation}
we obtain the following linear systems:
\begin{equation}\label{eqn:dkdv_Lax_Psi_delta}
 \overline{\Psi}
 =\de^{(1)}_{l,m}
 \begin{pmatrix}
 \overline{u}-\dfrac{~1~}{u}&\la\\
 1&0
 \end{pmatrix}
 \Psi,\quad
 \widetilde{\Psi}
 =\de^{(2)}_{l,m}
 \begin{pmatrix}
 -\dfrac{~1~}{u}&\la\\
 1&-u
 \end{pmatrix}
 \Psi,
\end{equation}
where $\de^{(1)}_{l,m}$ and $\de^{(2)}_{l,m}$ are arbitrary decoupling factors.
The system of linear equations \eqref{eqn:dkdv_Lax_Psi_delta} is the Lax pair of Equation \eqref{eqn:dkdv_u_lm}.
Indeed, we can easily verify that the compatibility condition $\widetilde{\overline{\Psi}}=\overline{\widetilde{\Psi}}$ gives Equation \eqref{eqn:dkdv_u_lm} and 
\begin{equation}
 \de^{(1)}_{l,m}\de^{(2)}_{l+1,m}=\de^{(1)}_{l,m+1}\de^{(2)}_{l,m}.
\end{equation}
For simplicity, in what follows we take 
\begin{equation}
 \de^{(1)}_{l,m}=\de^{(2)}_{l,m}=1.
\end{equation}

We next show how the multi-quadratic quad-equation \eqref{eqn:intro_multiquadratic_1} arises from the system \eqref{eqns:dkdv_uv_P234}. 
\begin{theorem}\label{theorem:dkdv_mul}
If $u$ and $v$ satisfy the system of equations \eqref{eqns:dkdv_uv_P234}, then $v$ satisfies the multi-quadratic equation \eqref{eqn:intro_multiquadratic_1}. 
\end{theorem}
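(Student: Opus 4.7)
The plan is to eliminate the auxiliary variables $u$ and $\overline{u}$ from the system \eqref{eqns:dkdv_uv_P234} and derive a polynomial identity in $v, \overline{v}, \widetilde{v}, \widetilde{\overline{v}}$ alone. The key observation is that $\mathsf C$ in the form \eqref{eqn:dkdv_uv_P4} gives $\overline{u} = 1/u - \la/v + \overline{v}$, while $\mathsf S$ in the form \eqref{eqn:dkdv_uv_P2} gives $1/\overline{u} = u + \la/\widetilde{v} - \widetilde{\overline{v}}$. Multiplying these two expressions and invoking $\overline{u}\cdot(1/\overline{u})=1$ eliminates $\overline{u}$ at the cost of introducing a single relation between $u$ and the four $v$-variables. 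Paired with $\mathsf B$, which after clearing denominators is already a quadratic in $u$, this gives two independent quadratic equations in $u$; the vanishing of their resultant will be exactly \eqref{eqn:intro_multiquadratic_1}.

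Concretely, introducing the abbreviations $P := \la - v\overline{v}$ and $Q := \la - \widetilde{v}\widetilde{\overline{v}}$, the product identity above, after multiplying through by $uv\widetilde{v}$, collapses to the quadratic
\[
 P\,\widetilde{v}\,u^2 + PQ\,u - vQ = 0,
\]
while \eqref{eqn:dkdv_uv_P3} becomes the quadratic
\[
 \widetilde{v}\,u^2 + (\la - v\widetilde{v})\,u - v = 0.
\]
Using the identity $Q - P = v\overline{v} - \widetilde{v}\widetilde{\overline{v}}$, the three cross-terms in the standard resultant formula $(a_1c_2 - a_2c_1)^2 - (a_1b_2 - a_2b_1)(b_1c_2 - b_2c_1)$ simplify cleanly to $v\widetilde{v}(v\overline{v} - \widetilde{v}\widetilde{\overline{v}})$, $P\widetilde{v}^{\,2}(\widetilde{\overline{v}} - v)$, and $Qv^{2}(\overline{v} - \widetilde{v})$ respectively. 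Consequently, the resultant factorizes as
\[
 v^{2}\widetilde{v}^{\,2}\bigl[(v\overline{v}-\widetilde{v}\widetilde{\overline{v}})^{2} + PQ(v-\widetilde{\overline{v}})(\overline{v}-\widetilde{v})\bigr].
\]
Since $v$ and $\widetilde v$ are generically nonzero, the bracketed factor must vanish, which after unpacking $P$ and $Q$ is exactly Equation \eqref{eqn:intro_multiquadratic_1}.

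The calculation is essentially routine, so there is no deep obstacle; the main substantive challenge is organizational. Recognizing the abbreviations $P$ and $Q$ is what exposes the clean factorization: the leading term $(v\overline{v}-\widetilde{v}\widetilde{\overline{v}})^{2}$ arises as $(Q-P)^{2}$ from the squared cross-term, while the product $PQ(v-\widetilde{\overline{v}})(\overline{v}-\widetilde{v})$ arises from the product of the two remaining cross-terms. Without this bookkeeping, the intermediate polynomials are opaque and it is not immediately transparent that the resultant should factor into precisely the desired multi-quadratic form; once seen, the proof reduces to a direct verification.
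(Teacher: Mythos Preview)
Your proof is correct and follows essentially the same route as the paper: eliminate $\overline{u}$ from $\mathsf S$ and $\mathsf C$ to obtain a relation between $u$ and the four $v$-values (your first quadratic is exactly the paper's intermediate equation $u - v/P + u^{2}\widetilde{v}/Q = 0$ after clearing denominators), and then eliminate $u$ against $\mathsf B$. The only difference is presentational: the paper states the second elimination in one line, whereas you carry it out explicitly via the resultant formula with the helpful abbreviations $P,Q$.
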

\begin{proof} 
Eliminating the term $\overline{u}$ from equations \eqref{eqn:dkdv_uv_P2} and \eqref{eqn:dkdv_uv_P4}, we obtain
\begin{equation}
 u-\dfrac{v}{\la-v\overline{v}}+\dfrac{u^2\,\widetilde{v}}{\la-\widetilde{v}\,\widetilde{\overline{v}}}=0.
\end{equation}
Furthermore, eliminating the term $u$ from the equation above and Equation \eqref{eqn:dkdv_uv_P3}, we obtain 
\begin{equation}\label{eqn:dkdv_v_P5}
 {\mathcal A}'\Big(v,\overline{v},\widetilde{v},\widetilde{\overline{v}}\Big)
 =\Big(v\overline{v}-\widetilde{v}\,\widetilde{\overline{v}}\Big)^2+\Big(v-\widetilde{\overline{v}}\Big)\Big(\overline{v}-\widetilde{v}\Big)\Big(\la-v\overline{v}\Big)\Big(\la-\widetilde{v}\,\widetilde{\overline{v}}\Big)=0,
\end{equation}
which is equivalent to Equation \eqref{eqn:intro_multiquadratic_1}.
This completes the proof.
\end{proof}

\begin{remark}
The multivariate polynomial ${\mathcal A}'$ given by Equation \eqref{eqn:dkdv_v_P5} satisfies the Kleinian symmetry:
\begin{equation}\label{eqn:Kleinian symmetry}
 {\mathcal A}'\Big(v,\overline{v},\widetilde{v},\widetilde{\overline{v}}\Big)
 ={\mathcal A}'\Big(\overline{v},v,\widetilde{\overline{v}},\widetilde{v}\Big)
 ={\mathcal A}'\Big(\widetilde{v},\widetilde{\overline{v}},v,\overline{v}\Big),
\end{equation}
and the discriminants of ${\mathcal A}'={\mathcal A}'\Big(v,\overline{v},\widetilde{v},\widetilde{\overline{v}}\Big)$ are given by
\begin{subequations}\label{eqns:A'_discriminants}
\begin{align}
 &\Delta[{\mathcal A}',v]
 =G\Big(\overline{v},\widetilde{v}\Big)H_1\Big(\widetilde{v},\widetilde{\overline{v}}\Big)H_2\Big(\overline{v},\widetilde{\overline{v}}\Big),\\
 &\Delta[{\mathcal A}',\overline{v}]
 =G\Big(v,\widetilde{\overline{v}}\Big)H_1\Big(\widetilde{v},\widetilde{\overline{v}}\Big)H_2\Big(v,\widetilde{v}\Big),\\
 &\Delta[{\mathcal A}',\widetilde{v}]
 =G\Big(v,\widetilde{\overline{v}}\Big)H_1\Big(v,\overline{v}\Big)H_2\Big(\overline{v},\widetilde{\overline{v}}\Big),\\
 &\Delta[{\mathcal A}',\widetilde{\overline{v}}]
 =G\Big(\overline{v},\widetilde{v}\Big)H_1\Big(v,\overline{v}\Big)H_2\Big(v,\widetilde{v}\Big),
\end{align}
\end{subequations}
where the polynomials $G(x,y)$, $H_1(x,y)$ and $H_2(x,y)$ are given by
\begin{equation}
 G(x,y)=(x-y)^2,\quad
 H_1(x,y)=(xy-\la)^2,\quad
 H_2(x,y)=(xy-\la)^2+4xy.
\end{equation}
Note that for a quadratic equation $Q=ax^2+bx+c$ the discriminant $\Delta[Q,x]$ is defined by
\begin{equation}
 \Delta[Q,x]=b^2-4ac.
\end{equation}
From the above information, we find that Equation \eqref{eqn:dkdv_v_P5} is $dQ3^{0\ast}$ type in the sense given in \cite{kassotakis2018difference,atkinson2014multi}.
\end{remark}

We finally show that Equations \eqref{eqn:dkdv_u_lm} and \eqref{eqn:intro_multiquadratic_1} can be solved by the lattice potential mKdV equation \eqref{eqn:dkdv_phi_lmkdv}. 
Rewrite the Lax pair \eqref{eqn:dkdv_Lax_Psi_delta} as follows
\begin{equation}\label{eqn:dkdv_FG_eqns}
 \overline{F}=-\Big(\dfrac{~1~}{u}-\overline{u}\Big)F+\la G,\quad
 \overline{G}=F,\quad
 \widetilde{F}=-\dfrac{~1~}{u}F+\la G,\quad
 \widetilde{G}=F-uG.
\end{equation}
Eliminating the variables $u$ and $F$ from Equations \eqref{eqn:dkdv_FG_eqns} leads to
 the following {\PDE}:
\begin{equation}\label{eqn:dkdv_quad_G}
 \Big(\widetilde{\overline{G}}-\la G\Big)\Big(\overline{G}-\widetilde{G}\Big)+G\overline{G}=0.
\end{equation}
Equation \eqref{eqn:dkdv_quad_G} is known as the lattice potential mKdV equation (lmKdV) \cite{NC1995:MR1329559,NQC1983:MR719638}, also labelled $H3$ in the list of lattice equations with the CAC property \cite{ABS2003:MR1962121}.
Indeed, by setting 
\begin{equation}\label{eqn:dkdv_phiG_eqn}
 \phi_{l,m}=(-\la)^{-l/2}(1-\la)^{-m/2}G_{l,m},
\end{equation}
Equation \eqref{eqn:dkdv_quad_G} can be rewritten in the following canonical form:
\begin{equation}\label{eqn:dkdv_phi_lmkdv}
 \dfrac{~\widetilde{\overline{\phi}}~}{\phi}
 =\dfrac{(1-\la)^{1/2}\overline{\phi}-(-\la)^{1/2}\widetilde{\phi}}{(1-\la)^{1/2}\widetilde{\phi}-(-\la)^{1/2}\overline{\phi}},
\end{equation}
where $\phi=\phi_{l,m}$.
Moreover, from Equations \eqref{eqn:dkdv_vFG}, \eqref{eqn:dkdv_FG_eqns} and \eqref{eqn:dkdv_phiG_eqn}
we obtain 
\begin{subequations}
\begin{align}
 &u=\dfrac{F-\widetilde{G}}{G}
 =\dfrac{\overline{G}-\widetilde{G}}{G}
 =\dfrac{(-\la)^{1/2}\overline{\phi}-(1-\la)^{1/2}\widetilde{\phi}}{\phi},\\
 &v=\dfrac{F}{G}
 =\dfrac{\overline{G}}{G}
 =(-\la)^{1/2}\dfrac{~\overline{\phi}~}{\phi}.
\end{align}
\end{subequations}
Therefore, we obtain the following lemma.
\begin{lemma}
Let $\phi=\phi_{l,m}$ be a solution of Equation \eqref{eqn:dkdv_phi_lmkdv}. 
Then, $u$ and $v$ given by
\begin{equation}
 u=\dfrac{(-\la)^{1/2}\overline{\phi}-(1-\la)^{1/2}\widetilde{\phi}}{\phi},\quad
 v=(-\la)^{1/2}\dfrac{~\overline{\phi}~}{\phi},
\end{equation}
solve Equations  \eqref{eqn:dkdv_u_lm} and \eqref{eqn:intro_multiquadratic_1}, respectively. 
\end{lemma}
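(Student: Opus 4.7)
The plan is to unwind the chain of substitutions built up in the preceding construction. Given a solution $\phi$ of the lmKdV equation \eqref{eqn:dkdv_phi_lmkdv}, I would first define
\[
 G_{l,m}:=(-\la)^{l/2}(1-\la)^{m/2}\phi_{l,m},
\]
so that $G$ satisfies the quad-equation \eqref{eqn:dkdv_quad_G}, by running the change of variables \eqref{eqn:dkdv_phiG_eqn} in reverse. Then set $F:=\overline{G}$, which makes the second relation in \eqref{eqn:dkdv_FG_eqns} automatic, and introduce $u$ and $v$ via
\[
 u=\dfrac{\overline{G}-\widetilde{G}}{G},\qquad v=\dfrac{F}{G}=\dfrac{\overline{G}}{G}.
\]
A direct substitution of $G_{l,m}$ in terms of $\phi_{l,m}$ in these two ratios then recovers the explicit formulae stated in the lemma.

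Next, I would verify that the 2-vector $\Psi=(F,G)^{T}$ built from these choices actually satisfies the Lax system \eqref{eqn:dkdv_Lax_Psi_delta} with $\de^{(1)}_{l,m}=\de^{(2)}_{l,m}=1$. Two of the four scalar relations in \eqref{eqn:dkdv_FG_eqns} are built directly into the definitions of $F$ and $u$; the remaining two (namely the equations for $\overline{F}$ and $\widetilde{F}$) are algebraic rearrangements of \eqref{eqn:dkdv_quad_G} after eliminating $F$ and $u$. Once this is established, the compatibility calculation $\widetilde{\overline{\Psi}}=\overline{\widetilde{\Psi}}$ performed earlier in the section shows that $u$ solves the dKdV equation \eqref{eqn:dkdv_u_lm}. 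Moreover, the pair $(u,v)$ satisfies the full system \eqref{eqns:dkdv_uv_P234}, because those equations are precisely the numerator/denominator relations of \eqref{eqn:dkdv_FG_eqns} rewritten under $v=F/G$. Hence Theorem \ref{theorem:dkdv_mul} applies and $v$ solves the multi-quadratic equation \eqref{eqn:intro_multiquadratic_1}.

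The main obstacle is the consistency check in the middle step: one must confirm that the quad-equation \eqref{eqn:dkdv_quad_G} precisely supplies the redundancy missing from the definitions of $F$ and $u$, so that the overdetermined linear system \eqref{eqn:dkdv_FG_eqns} has $(F,G)$ as a simultaneous solution. This is an algebraic identity rather than a conceptual difficulty, but it has to be carried out carefully, since each of the four scalar relations of \eqref{eqn:dkdv_FG_eqns} must be verified before the Lax-pair compatibility argument can be invoked.
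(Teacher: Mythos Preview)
The paper's own proof is a single sentence: ``The statement can be verified by direct calculation.'' In other words, one substitutes the proposed formulae for $u$ and $v$ into \eqref{eqn:dkdv_u_lm} and \eqref{eqn:intro_multiquadratic_1} and simplifies using only \eqref{eqn:dkdv_phi_lmkdv}. Your route is genuinely different: you rebuild $G$, $F=\overline G$ and the Lax vector $\Psi$ from $\phi$, then invoke the compatibility of \eqref{eqn:dkdv_Lax_Psi_delta} and Theorem~\ref{theorem:dkdv_mul}. This is more conceptual and explains \emph{why} the formulae work, whereas the paper's argument is pure verification.

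Two points need tightening. First, the compatibility step: from a \emph{single} vector solution $\Psi$ you obtain only $(\widetilde{L}M-\overline{M}L)\Psi=0$, not the matrix identity $\widetilde{L}M=\overline{M}L$ that the earlier discussion used. Here this is harmless, because a direct computation gives
\[
 \widetilde{L}M-\overline{M}L
 =E\begin{pmatrix}-1/u&\la\\0&0\end{pmatrix},\qquad
 E:=\widetilde{\overline{u}}-u-\dfrac{1}{\widetilde{u}}+\dfrac{1}{\overline{u}},
\]
and its action on $\Psi$ has first component $E\bigl(-F/u+\la G\bigr)=E\,\widetilde F=E\,\widetilde{\overline G}$; since $\widetilde{\overline G}\neq0$ for generic $\phi$, you do recover $E=0$. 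But this should be stated rather than subsumed into ``the compatibility calculation performed earlier''. Second, your claim that the full system \eqref{eqns:dkdv_uv_P234} is ``precisely the numerator/denominator relations of \eqref{eqn:dkdv_FG_eqns}'' is accurate only for ${\mathsf B}$ and ${\mathsf C}$; equation ${\mathsf A}$ is dKdV itself (just established), and ${\mathsf S}$ then follows by adding ${\mathcal A}$ and ${\mathcal C}'$. With these two clarifications your argument is complete, and Theorem~\ref{theorem:dkdv_mul} yields \eqref{eqn:intro_multiquadratic_1} for $v$ as you say.
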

\begin{proof}
The statement can be verified by direct calculation.
\end{proof}

\section{Properties of Equations \eqref{eqn:deauto_dkdv_u} and \eqref{eqn:redbook390_u}}\label{s:fulldkdv}
In this section, we show that Equations \eqref{eqn:deauto_dkdv_u} and \eqref{eqn:redbook390_u} have the same properties as Equation \eqref{eqn:dkdv_u_lm}.
The process for demonstrating the result for each equation is exactly the same as that for Equation \eqref{eqn:dkdv_u_lm} discussed in Section \ref{section:props_dkdv} and so, for consciseness, we omit detailed arguments.

\subsection{Properties of Equation \eqref{eqn:deauto_dkdv_u}}
The system of {\PDE}s for Equation \eqref{eqn:deauto_dkdv_u}, which has the CABC and tetrahedron properties, is given by:
\begin{subequations}\label{eq:fulltransformations}
\begin{align}
 &{\mathsf A}=\widetilde{\overline{u}}-u-\dfrac{q_{m+1}-p_l}{\widetilde{u}}+\dfrac{q_m-p_{l+1}}{\overline{u}}=0,\\
 &{\mathsf S}=u-\widetilde{\overline{v}}-\dfrac{q_m-p_{l+1}}{\overline{u}}+\dfrac{\la-p_l}{\widetilde{v}}=0,\\
 &{\mathsf B}=\Big(u-v\Big)\Big(\dfrac{q_m-p_l}{u}+\widetilde{v}\Big)-q_m+\la=0,\\
 &{\mathsf C}=\dfrac{q_m-p_l}{u}-\dfrac{\la-p_l}{v}-\overline{u}+\overline{v}=0.
\end{align}
\end{subequations}
Moreover, the tetrahedron equations are given by 
\begin{subequations}
\begin{align}
 &{\mathcal K}_1
 =\Big(\widetilde{\overline{v}}+\dfrac{q_m-p_{l+1}}{\overline{u}}\Big)\Big(\dfrac{q_m-p_l}{u}-\dfrac{\la-p_l}{v}\Big)-q_m+\la
 =0,\\
 &{\mathcal K}_2
 =\Big(\dfrac{\la-p_l}{\widetilde{v}}+u\Big)\Big(\overline{u}-\overline{v}\Big)-q_m+\la
 =0,
\end{align}
\end{subequations}
and the Lax pair of Equation \eqref{eqn:deauto_dkdv_u} is given by
\begin{equation}\label{eqn:deauto_dkdv_Lax_Psi}
 \overline{\Psi}
 =\begin{pmatrix}
 \overline{u}-\dfrac{q_m-p_l}{u}&\la-p_l\\
 1&0
 \end{pmatrix}
 \Psi,\quad
 \widetilde{\Psi}
 =\begin{pmatrix}
 -\dfrac{q_m-p_l}{u}&\la-p_l\\
 1&-u
 \end{pmatrix}
 \Psi.
\end{equation}

\begin{theorem}
If $u$ and $v$ satisfy the system of equations \eqref{eq:fulltransformations}, then $v$ satisfies the multi-quadratic equation \eqref{eqn:intro_multiquadratic_2}. 
\end{theorem}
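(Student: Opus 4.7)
The plan is to emulate the elimination argument from the proof of Theorem \ref{theorem:dkdv_mul}, keeping careful track of the parameters $p_l$, $p_{l+1}$, $q_m$ and $\lambda$ that enter the non-autonomous system \eqref{eq:fulltransformations}. The goal is to produce a polynomial relation involving only $v$, $\overline{v}$, $\widetilde{v}$, $\widetilde{\overline{v}}$ (and the parameters), which I expect to coincide with \eqref{eqn:intro_multiquadratic_2}.

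First I would use ${\mathsf S}=0$ and ${\mathsf C}=0$ to eliminate the variable $\overline{u}$. The equation ${\mathsf S}=0$ expresses $\overline{u}$ rationally in $u$, $\widetilde{v}$, $\widetilde{\overline{v}}$ and the parameters $p_{l+1}$, $q_m$, $\lambda$, while ${\mathsf C}=0$ expresses $\overline{u}$ rationally in $u$, $v$, $\overline{v}$ and the parameters $p_l$, $q_m$, $\lambda$. Setting the two expressions equal and clearing denominators yields an intermediate polynomial identity $F(u;v,\overline{v},\widetilde{v},\widetilde{\overline{v}};\mathbf{p},q_m,\lambda)=0$ of low degree in $u$, analogous to the auxiliary equation that appears at the corresponding step of the proof of Theorem \ref{theorem:dkdv_mul}.

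Second, I would use ${\mathsf B}=0$ to eliminate the remaining variable $u$ from $F$. Since both $F$ and ${\mathsf B}$ are polynomial (after clearing denominators) in $u$, the elimination can be performed by taking the resultant $\mathrm{Res}_u(F,{\mathsf B})$. After factoring the result and discarding extraneous factors arising from the clearance of denominators or from the simultaneous vanishing of leading coefficients, the remaining irreducible factor should be the left-hand side of \eqref{eqn:intro_multiquadratic_2}. As a sanity check, specialising $p_l=p_{l+1}=0$ and $q_m=q_{m+1}=1$ reduces the system \eqref{eq:fulltransformations} to \eqref{eqns:dkdv_uv_P234} and collapses \eqref{eqn:intro_multiquadratic_2} to \eqref{eqn:intro_multiquadratic_1}, consistent with Theorem \ref{theorem:dkdv_mul}.

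The main obstacle will be algebraic bookkeeping: with four independent parameters and five distinct $v$-iterates present, the resultant carries many more monomial terms than in the autonomous proof, and distinguishing the irreducible factor of interest from spurious factors requires care. For this reason I would carry out the elimination with a symbolic algebra system and then confirm the identification with \eqref{eqn:intro_multiquadratic_2} both by direct term-by-term comparison and by the specialisation check above. No new structural idea beyond Theorem \ref{theorem:dkdv_mul} is anticipated.
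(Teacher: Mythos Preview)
Your proposal is correct and follows essentially the same approach as the paper: the paper's proof simply states that the method is identical to that of Theorem~\ref{theorem:dkdv_mul}, namely eliminate $\overline{u}$ from ${\mathsf S}=0$ and ${\mathsf C}=0$, then eliminate $u$ using ${\mathsf B}=0$. (One small slip: there are four $v$-iterates, not five.)
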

\begin{proof}
The method of proof is the same as in the proof of Theorem \ref{theorem:dkdv_mul}.
\end{proof}

As for Equation \eqref{eqn:dkdv_u_lm}, there is a relation between the solutions of the non-autonomous dKdV equation \eqref{eqn:deauto_dkdv_u} and that of Equation \eqref{eqn:intro_multiquadratic_2} with solutions of a non-autonomous lmKdV equation, as shown by the following lemma.
\begin{lemma}
Let $\phi=\phi_{l,m}$ be a solution of the non-autonomous form of lmKdV\cite{NC1995:MR1329559,NQC1983:MR719638,ABS2003:MR1962121}:
\begin{equation}\label{eqn:dkdv_phi_lmkdv_ful_para}
 \dfrac{~\widetilde{\overline{\phi}}~}{\phi}
 =\dfrac{\al_l\,\overline{\phi}-\be_m\,\widetilde{\phi}}{\al_l\,\widetilde{\phi}-\be_m\,\overline{\phi}},
\end{equation}
where
\begin{equation}
 \al_l=\Big(p_l-\la\Big)^{-1/2},\quad
 \be_m=\Big(q_m-\la\Big)^{-1/2}.
\end{equation}
Then, $u$ and $v$ given by
\begin{equation}
 u=\dfrac{\Big(p_l-\la\Big)^{1/2}\overline{\phi}-\Big(q_m-\la\Big)^{1/2}\widetilde{\phi}}{\phi},\quad
 v=\Big(p_l-\la\Big)^{1/2}\dfrac{~\overline{\phi}~}{\phi},
\end{equation}
solve Equations \eqref{eqn:deauto_dkdv_u} and \eqref{eqn:intro_multiquadratic_2}, respectively.
\end{lemma}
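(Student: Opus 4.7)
The plan is to mirror the derivation used for the autonomous case in Section \ref{section:props_dkdv}, which has the advantage of revealing both \emph{where} the formulas for $u$ and $v$ come from and \emph{why} they work. As in the proof of the analogous autonomous lemma, the final verification step will ultimately reduce to direct calculation, but the Lax-pair route makes it almost mechanical once the gauge is fixed correctly.

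First, I would set $\Psi = (F, G)^T$ and expand the Lax pair \eqref{eqn:deauto_dkdv_Lax_Psi} component-wise. The second components of the two matrix equations immediately give $\overline{G} = F$ and $\widetilde{G} = F - uG$, from which
\[
 u = \dfrac{\overline{G} - \widetilde{G}}{G}, \qquad v = \dfrac{F}{G} = \dfrac{\overline{G}}{G},
\]
the non-autonomous analogues of the formulas obtained at the end of Section \ref{section:props_dkdv}. Next, eliminating $u$ and $F$ between the first components of \eqref{eqn:deauto_dkdv_Lax_Psi} produces a single scalar quad-equation in $G$, playing the role of \eqref{eqn:dkdv_quad_G}. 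I would then introduce a gauge $\phi_{l,m} = c_l\, d_m\, G_{l,m}$ and choose the scalars $c_l$, $d_m$ so that all surviving coefficients combine into the quantities $\alpha_l = (p_l - \lambda)^{-1/2}$ and $\beta_m = (q_m - \lambda)^{-1/2}$. This should recast the scalar equation as the canonical non-autonomous lmKdV \eqref{eqn:dkdv_phi_lmkdv_ful_para}, and substituting the chosen gauge back into $u = (\overline{G} - \widetilde{G})/G$ and $v = \overline{G}/G$ reproduces exactly the expressions for $u$ and $v$ in the statement.

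With the formulas derived this way, the verification is structurally automatic: compatibility of \eqref{eqn:deauto_dkdv_Lax_Psi} is precisely \eqref{eqn:deauto_dkdv_u}, while the same elimination scheme used in Theorem \ref{theorem:dkdv_mul}, namely eliminating $\overline{u}$ between $\mathsf S = 0$ and $\mathsf C = 0$ and then using $\mathsf B = 0$, yields the multi-quadratic identity \eqref{eqn:intro_multiquadratic_2} for $v$. The main obstacle I anticipate is the bookkeeping for \eqref{eqn:intro_multiquadratic_2}: its coefficients mix $p_l$, $p_{l+1}$, $q_m$, and $\lambda$ in a considerably more intricate way than the single-parameter autonomous version \eqref{eqn:intro_multiquadratic_1}, so consistent tracking of square roots such as $(p_l - \lambda)^{1/2}$ and $(q_m - \lambda)^{1/2}$ across adjacent vertices of the quadrilateral will be the principal source of computational difficulty. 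This is the step where the ``direct calculation'' reduction is easy to state but tedious to execute in practice.
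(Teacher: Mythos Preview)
Your proposal is correct and mirrors the paper's approach: the paper's own proof of this lemma is the single line ``The statement can be verified by direct calculation,'' relying on the fact that the analogous derivation via the Lax pair \eqref{eqn:deauto_dkdv_Lax_Psi} and gauge \eqref{eqn:dkdv_phiG_eqn} was already spelled out for the autonomous case in Section~\ref{section:props_dkdv}. You have simply reconstructed that omitted derivation in the non-autonomous setting, which is exactly what the paper intends the reader to do.
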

\begin{proof}
The statement can be verified by direct calculation.
\end{proof}

\subsection{Properties of Equation \eqref{eqn:redbook390_u}}
The system of {\PDE}s for the lattice sine-Gordon equation \eqref{eqn:redbook390_u}, which has the CABC and tetrahedron properties, is given by:
\begin{subequations}\label{eq:redbooktransf}
\begin{align}
 &{\mathsf A}
 =\dfrac{~\widetilde{\overline{u}}~}{u}-\dfrac{(\ga \widetilde{u}-1)(\ga-\overline{u})}{(\ga-\widetilde{u})(\ga\overline{u}-1)}
 =0,\\
 &{\mathsf S}
 =\Big(1-\ga\overline{u}\Big)\Big(\la+\ga\widetilde{v}\Big)
 -\Big(\ga-\overline{u}\Big)\Big(\ga -\widetilde{\overline{v}}\Big)u\widetilde{v}
 =0,\\
 &{\mathsf B}
 =\Big(\la+\ga \widetilde{v}-u\widetilde{v}\Big)\Big(1-\ga u+uv\Big)+\Big(1-\ga^2-\la\Big)uv
 =0,\\
 &{\mathsf C}
 =\Big(1-\ga u\Big)\Big(\la+\ga v\Big)-\Big(\ga-u\Big)\Big(\ga-\overline{v}\Big)\overline{u}v
 =0.
\end{align}
\end{subequations}
Moreover, the tetrahedron equations are given by 
\begin{subequations}
\begin{align}
 {\mathcal K}_1
 =&\Big(\la-\ga \la u+(1-\ga^2)uv\Big)
 \Big(1-\ga^2+\ga \widetilde{\overline{v}}-\overline{u}\,\widetilde{\overline{v}}\Big)\notag\\
 &+\Big(1-\ga^2-\la\Big)\Big(\ga-u\Big)\Big(1-\ga\overline{u}\Big)v
 =0,\\
 {\mathcal K}_2
 =&\Big(\la+\ga\widetilde{v}-u\widetilde{v}\Big)\Big(1-\ga\overline{u}+\overline{u}\,\overline{v}\Big)
 +\Big(1-\ga^2-\la\Big)u\widetilde{v}
 =0,
\end{align}
\end{subequations}
and the Lax pair of Equation \eqref{eqn:redbook390_u} is given by
\begin{equation}\label{eqn:redbook390_Lax_Psi}
 \overline{\Psi}
 =\begin{pmatrix}
 \ga+\dfrac{\ga(\ga u-1)}{(\ga-u)\overline{u}}&\dfrac{\la(\ga u-1)}{(\ga-u)\overline{u}}\\
 1&0
 \end{pmatrix}
 \Psi,\quad
 \widetilde{\Psi}
 =\begin{pmatrix}
 \dfrac{(\ga^2-1)u}{\ga-u}&\dfrac{\la(\ga u-1)}{\ga-u}\\
 u&1-\ga u
 \end{pmatrix}
 \Psi.
\end{equation}

\begin{theorem}
If $u$ and $v$ satisfy the system of equations \eqref{eq:redbooktransf}, then $v$ satisfies the multi-quadratic equation \eqref{eqn:intro_multiquadratic_3}. 
\end{theorem}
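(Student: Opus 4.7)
The plan is to mimic, step for step, the derivation of the multi-quadratic equation \eqref{eqn:dkdv_v_P5} from the system \eqref{eqns:dkdv_uv_P234} carried out in the proof of Theorem \ref{theorem:dkdv_mul}. Only three of the six cube equations \eqref{eq:redbooktransf} are needed, namely $\mathsf S$, $\mathsf B$, and $\mathsf C$; the face equation $\mathsf A$ is not used, because the desired identity holds purely on the vertex configuration of the diagonal slice $\mathcal S$ together with the back and front faces $\mathcal C$, $\mathcal C'$ and the triangular panels $\mathcal B$, $\mathcal B'$. The goal is to eliminate the variable $u$ and its shifts in order to reduce to a polynomial identity in $(v,\overline v,\widetilde v,\widetilde{\overline v})$ alone.

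First I would observe that both $\mathsf S$ and $\mathsf C$ are affine linear in $\overline u$, as one sees directly from \eqref{eq:redbooktransf}. Solving each for $\overline u$ gives
\[
 \overline u
 =\dfrac{(1-\ga u)(\la+\ga v)}{(\ga-u)(\ga-\overline v)v}
 =\dfrac{(\la+\ga\widetilde v)-\ga u\widetilde v(\ga-\widetilde{\overline v})}{\ga(\la+\ga\widetilde v)-u\widetilde v(\ga-\widetilde{\overline v})}.
\]
Clearing denominators and equating the two expressions produces an intermediate relation $\mathcal R(u,v,\overline v,\widetilde v,\widetilde{\overline v})=0$ which is polynomial in $u$. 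The next step is to eliminate $u$ between $\mathcal R=0$ and $\mathsf B=0$; since $\mathsf B$ is a polynomial in $u$ of degree two (as one sees from expanding $(\la+\ga\widetilde v-u\widetilde v)(1-\ga u+uv)$), the natural implement is the resultant $\mathrm{Res}_u(\mathcal R,\mathsf B)$. This resultant is a polynomial in $(v,\overline v,\widetilde v,\widetilde{\overline v})$ (with $\la$ and $\ga$ as parameters) whose vanishing is a consequence of the original system.

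The final step is to identify $\mathrm{Res}_u(\mathcal R,\mathsf B)$ with the left-hand side of \eqref{eqn:intro_multiquadratic_3} up to an inessential factor that is non-zero for generic initial data. Comparing degrees and the Kleinian symmetry structure (an analogue of \eqref{eqn:Kleinian symmetry} that holds for \eqref{eqn:intro_multiquadratic_3}) will identify which factor to extract: one expects spurious factors involving $(\ga-u)$, $(\ga-\overline v)$, and similar quantities that came from clearing denominators when solving $\mathsf C$ for $\overline u$. Dividing these out leaves precisely the multi-quadratic polynomial of \eqref{eqn:intro_multiquadratic_3}.

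I expect the principal obstacle to be purely computational: the lattice sine-Gordon equations contain products of the form $(\ga-u)$, $(\ga u-1)$, $(\la+\ga v)$, so the expansions are noticeably bulkier than for the dKdV case, and tracking the extraneous factors that appear when clearing denominators or taking the resultant is the only delicate point. Once the correct extraneous factor is removed, the resulting polynomial must be verified to coincide with \eqref{eqn:intro_multiquadratic_3}; this last step is mechanical and is best executed in a computer algebra system, as is implicit in the authors' statement that the result can be confirmed by direct calculation.
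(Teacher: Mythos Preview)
Your proposal is correct and follows essentially the same approach as the paper: the authors' proof simply states that ``the method of proof is the same as in the proof of Theorem \ref{theorem:dkdv_mul},'' which is precisely the two-step elimination (first of $\overline u$ from $\mathsf S$ and $\mathsf C$, then of $u$ using $\mathsf B$) that you describe. Your added remarks about using a resultant and stripping spurious factors are a reasonable way to make explicit what the paper leaves as ``direct calculation.''
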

\begin{proof}
The method of proof is the same as in the proof of Theorem \ref{theorem:dkdv_mul}.
\end{proof}

\begin{remark}
The multivariate polynomial ${\mathcal A}'$ for Equation \eqref{eqn:intro_multiquadratic_3}$:$
\begin{align}\label{eqn:redbook390_v_P5}
 {\mathcal A}'\Big(v,\overline{v},\widetilde{v},\widetilde{\overline{v}}\Big)
 =&\ga^2\Big(v-\widetilde{\overline{v}}\Big)\Big(\overline{v}-\widetilde{v}\Big)\Big(\la+v\overline{v}\Big)\Big(\la+\widetilde{v}\,\widetilde{\overline{v}}\Big)
 -\Big(1-\ga^2\Big)\Big(\ga^2-\la\Big)\Big(v\overline{v}-\widetilde{v}\,\widetilde{\overline{v}}\Big)^2\notag\\
 &+\ga\Big(1-\ga^2\Big)\Big(v\overline{v}-\widetilde{v}\,\widetilde{\overline{v}}\Big)
 \Biggm(\Big(\la+v\overline{v}\Big)\Big(\widetilde{v}+\widetilde{\overline{v}}\Big)-\Big(v+\overline{v}\Big)\Big(\la+\widetilde{v}\,\widetilde{\overline{v}}\Big)\Biggm)
 =0
\end{align}
satisfies the Kleinian symmetry \eqref{eqn:Kleinian symmetry},
and its discriminants are given by \eqref{eqns:A'_discriminants} with
\begin{subequations}
\begin{align}
 &G(x,y)=\ga^2(x-y)^2,\quad
 H_1(x,y)=(xy+\la)^2,\\
 &H_2(x,y)=\Big(\ga xy+(1-\ga^2)(x+y)-\ga\la\Big)^2-4\Big(1-\ga^2-\la\Big)xy.
\end{align}
\end{subequations}
From the information given above, we find that Equation \eqref{eqn:redbook390_v_P5} is an equation of $dQ4^{\ast}$-type in the sense of \cite{kassotakis2018difference,atkinson2014multi}.
\end{remark}

The solutions of Equations \eqref{eqn:redbook390_u} and \eqref{eqn:intro_multiquadratic_3} are related to another \PDE \ as shown by the following lemma.
\begin{lemma}
Let $\phi=\phi_{l,m}$ be a solution of the following {\PDE}:
\begin{equation}\label{eq:strangeexample}
 \dfrac{\Big(\ga\widetilde{\overline{\phi}}+\la\phi\Big)\Big(\overline{\phi}-\ga\widetilde{\phi}\Big)}
 {\Big(\widetilde{\overline{\phi}}-\overline{\phi}\Big)\Big(\widetilde{\phi}-\phi\Big)}
 =1-\ga^2.
\end{equation}
Then, $u$ and $v$ given by
\begin{equation}
 u=\dfrac{\phi-\widetilde{\phi}}{\ga\phi-\overline{\phi}},\quad
 v=\dfrac{~\overline{\phi}~}{\phi},
\end{equation}
solve Equations \eqref{eqn:redbook390_u} and \eqref{eqn:intro_multiquadratic_3}, respectively.
\end{lemma}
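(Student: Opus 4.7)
The plan is to verify the lemma by direct substitution, mirroring the construction given at the end of Section~\ref{section:props_dkdv} for the dKdV case. The formulas for $u$ and $v$ are not ad hoc: they arise naturally from the Lax pair \eqref{eqn:redbook390_Lax_Psi} by writing $\Psi=(F,G)^\top$, identifying $\phi=G$, and reading off the bottom rows of $\overline{\Psi}=M\Psi$ and $\widetilde{\Psi}=N\Psi$, which give $F=\overline{G}$ and $\widetilde{G}=uF+(1-\ga u)G$. Eliminating $F$ from the second relation yields $u=(\phi-\widetilde{\phi})/(\ga\phi-\overline{\phi})$, and $v=F/G=\overline{\phi}/\phi$.

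Under these substitutions, I would first show that the polynomial $\mathsf{B}(u,v,\widetilde{v})$ from \eqref{eq:redbooktransf}, after clearing denominators, equals a nonzero rational prefactor times the expression appearing in \eqref{eq:strangeexample}. A parallel calculation would show that $\mathsf{C}(u,\overline{u},v,\overline{v})$, after clearing denominators and extracting a trivial factor of the form $\ga\overline{\phi}-\overline{\overline{\phi}}$, also reduces to \eqref{eq:strangeexample}. Hence whenever $\phi$ satisfies \eqref{eq:strangeexample} throughout $\bbZ^2$, both $\mathsf{B}=0$ and $\mathsf{C}=0$ hold for the induced pair $(u,v)$.

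Next, eliminating $v$ between $\mathsf{B}=0$ and $\mathsf{C}=0$ (equivalently, invoking the Lax compatibility condition $\widetilde{\overline{\Psi}}=\overline{\widetilde{\Psi}}$, which is now automatic) produces $\mathsf{A}(u,\overline{u},\widetilde{u},\widetilde{\overline{u}})=0$, that is, Equation \eqref{eqn:redbook390_u} for $u$. Thus $(u,v)$ satisfies the full system \eqref{eq:redbooktransf}, and the theorem immediately preceding this lemma then forces $v$ to satisfy the multi-quadratic equation \eqref{eqn:intro_multiquadratic_3}.

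The main obstacle is the bookkeeping in the two polynomial identities: one must check that, under the stated substitutions, $\mathsf{B}$ and $\mathsf{C}$ factorise so that the trivial pieces split off cleanly and the remaining factor is precisely \eqref{eq:strangeexample}, without any extraneous contribution from the parameter combination $1-\ga^2-\la$. The verification is mechanical but delicate, requiring term-by-term matching of six mixed monomials in $\phi, \overline{\phi}, \widetilde{\phi}, \widetilde{\overline{\phi}}$; once this is established, everything else follows from the CABC framework developed in Section~\ref{section:props_dkdv}.
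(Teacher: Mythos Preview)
The paper's own proof is the single line ``The statement can be verified by direct calculation'': substitute the given $u$ and $v$ into \eqref{eqn:redbook390_u} and \eqref{eqn:intro_multiquadratic_3} and reduce modulo \eqref{eq:strangeexample} and its lattice shifts. Your plan is a more structured variant of the same verification, routing it through $\mathsf B$ and $\mathsf C$ and then invoking the preceding theorem; your explanation of how the substitutions arise from the second row of the Lax pair \eqref{eqn:redbook390_Lax_Psi} is a genuine addition that the paper does not spell out.

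There is, however, one soft spot in your outline. The step ``eliminating $v$ between $\mathsf B=0$ and $\mathsf C=0$ (equivalently, invoking the Lax compatibility $\widetilde{\overline\Psi}=\overline{\widetilde\Psi}$, which is now automatic) produces $\mathsf A=0$'' is not quite right as stated. Eliminating the four values $v,\overline v,\widetilde v,\widetilde{\overline v}$ from the four relations $\mathsf B,\mathsf B',\mathsf C,\mathsf C'$ generically imposes no constraint on the $u$-values alone; and the compatibility version only yields $(\widetilde M N-\overline N M)\Psi=0$ for your \emph{particular} $\Psi$, which is a rank-one condition of the form ``$\mathsf A=0$ \emph{or} $\Psi$ lies on a specific line'', not $\mathsf A=0$ outright. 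The clean fix is simply to verify $\mathsf A=0$ for $u=(\phi-\widetilde\phi)/(\ga\phi-\overline\phi)$ by direct substitution, reducing with \eqref{eq:strangeexample} on the relevant overlapping quads; this is exactly what the paper's ``direct calculation'' amounts to. Once $\mathsf A$ is secured, $\mathsf S$ follows from $\mathsf A$ and $\mathsf C'$ (see the remark after Definition~\ref{def:CABC_local}), and your appeal to the preceding theorem for \eqref{eqn:intro_multiquadratic_3} is then legitimate.
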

\begin{proof}
The statement can be verified by direct calculation.
\end{proof}

We note that Equation \eqref{eq:strangeexample} requires further study.

\section{Concluding remarks}\label{ConcludingRemarks}
Much of what we know about the real world is modelled through continuous mathematical equations that become discrete equations on the computer. 
Hirota's dKdV equation is an important example in the study of such discretizations, because it shares the distinctive properties of the KdV equation.

However, there are gaps in its study. 
The main open question studied in this paper concerns its embedding in a three-dimensional lattice and the question of its consistency.
By finding previously unknown transformations to other \PDE s, we show that there is an unusual embedding into a three-dimensional lattice along with a consistency property, which we call {\em consistency around a broken cube}. 
By using this property to construct a Lax pair for the dKdV equation, we show that the embedding is related to its integrability.
It is interesting to note that a previously unknown transformation to a multi-quadratic lattice equation also arises from this construction.

These observations lead to several open questions. 
One is how the construction may extend to higher dimensional lattices, i.e., $\bbZ^N$, where $N\ge 4$. 
A second important question is whether other integrable lattice equations, which do not satisfy the CAC property, turn out to satisfy the CABC property defined in Section \ref{s:3D}. 
We anticipate that there may be other generalizations of consistency that remain to be found, particularly when we consider non-scalar, multi-component \PDE s. 

\subsection*{Acknowledgment}
N. Nakazono would like to thank Dr. P. Kassotakis for inspiring and fruitful discussions about multi-quadratic equations and Dr. Y. Sun for those about the non-autonomous form of the dKdV equation \eqref{eqn:deauto_dkdv_u}.

\def\cprime{$'$} \def\cprime{$'$}


\begin{thebibliography}{10}
\bibitem{ABS2003:MR1962121}
V.~E. Adler, A.~I. Bobenko, and Y.~B. Suris.
\newblock Classification of integrable equations on quad-graphs. {T}he
  consistency approach.
\newblock {\em Comm. Math. Phys.}, 233(3):513--543, 2003.
\bibitem{ABS2009:MR2503862}
 V.~E. Adler, A.~I. Bobenko, and Y.~B. Suris.
 \newblock Discrete nonlinear hyperbolic equations: classification of integrable cases.
 \newblock {\em Funktsional. Anal. i Prilozhen.}, 43(1):3--21, 2009. 
\bibitem{atkinson2014multi}
J.~Atkinson and M.~Nieszporski.
\newblock Multi-quadratic quad equations: integrable cases from a
  factorized-discriminant hypothesis.
\newblock {\em International Mathematics Research Notices},
  2014(15):4215--4240, 2014.
\bibitem{atkinson2012multidimensionally}
J.~Atkinson.
\newblock A multidimensionally consistent version of {H}irota’s discrete {K}d{V}
  equation.
\newblock {\em Journal of Physics A: Mathematical and Theoretical},
  45(22):222001, 2012.
\bibitem{bobenko1993discrete}
A.~Bobenko, N.~Kutz, and U.~Pinkall.
\newblock The discrete quantum pendulum.
\newblock {\em Physics Letters A}, 177(6):399--404, 1993.
\bibitem{BS2002:MR1890049}
A.~I. Bobenko and Y.~B. Suris.
\newblock Integrable systems on quad-graphs.
\newblock {\em Int. Math. Res. Not. IMRN}, (11):573--611, 2002.
\bibitem{BollR2011:MR2846098}
 R.~Boll.
 \newblock Classification of 3{D} consistent quad-equations.
 \newblock {\em J. Nonlinear Math. Phys.}, 18(3):337--365, 2011.
\bibitem{BollR2012:MR3010833}
 R.~Boll.
 \newblock Corrigendum: {C}lassification of 3{D} consistent quad-equations.
 \newblock {\em J. Nonlinear Math. Phys.}, 19(4):1292001, 3, 2012.
\bibitem{CNP1991:MR1111648}
H.~W. Capel, F.~W. Nijhoff, and V.~G. Papageorgiou.
\newblock Complete integrability of {L}agrangian mappings and lattices of
  {K}d{V} type.
\newblock {\em Phys. Lett. A}, 155(6-7):377--387, 1991.
\bibitem{hietarinta2016discrete}
J.~Hietarinta, N.~Joshi, and F.~W. Nijhoff.
\newblock {\em Discrete systems and integrability}.
\newblock Cambridge Texts in Applied Mathematics. Cambridge University Press,
  Cambridge, 2016.
\bibitem{HirotaR1977:MR0460934}
R.~Hirota.
\newblock Nonlinear partial difference equations. {I}. {A} difference analogue
  of the {K}orteweg-de {V}ries equation.
\newblock {\em J. Phys. Soc. Japan}, 43(4):1424--1433, 1977.
\bibitem{kajiwara2008bilinearization}
K.~Kajiwara and Y.~Ohta.
\newblock Bilinearization and casorati determinant solution to the
  non-autonomous discrete kdv equation.
\newblock {\em Journal of the Physical Society of Japan}, 77(5):054004, 2008.
\bibitem{kassotakis2018difference}
P.~Kassotakis and M.~Nieszporski.
\newblock Difference systems in bond and face variables and non-potential
  versions of discrete integrable systems.
\newblock {\em Journal of Physics A: Mathematical and Theoretical},
  51(38):385203, 2018.
\bibitem{NC1995:MR1329559}
F.~Nijhoff and H.~Capel.
\newblock The discrete {K}orteweg-de {V}ries equation.
\newblock {\em Acta Appl. Math.}, 39(1-3):133--158, 1995.
\newblock KdV '95 (Amsterdam, 1995).
\bibitem{NijhoffFW2002:MR1912127}
F.~W. Nijhoff.
\newblock Lax pair for the {A}dler (lattice {K}richever-{N}ovikov) system.
\newblock {\em Phys. Lett. A}, 297(1-2):49--58, 2002.
\bibitem{NQC1983:MR719638}
F.~W. Nijhoff, G.~R.~W. Quispel, and H.~W. Capel.
\newblock Direct linearization of nonlinear difference-difference equations.
\newblock {\em Phys. Lett. A}, 97(4):125--128, 1983.
\bibitem{NW2001:MR1869690}
F.~W. Nijhoff and A.~J. Walker.
\newblock The discrete and continuous {P}ainlev\'e {VI} hierarchy and the
  {G}arnier systems.
\newblock {\em Glasg. Math. J.}, 43A:109--123, 2001.
\newblock Integrable systems: linear and nonlinear dynamics (Islay, 1999).
\bibitem{TGR2001:REMBLAY2001319}
S.~Tremblay, B.~Grammaticos, and A.~Ramani.
\newblock Integrable lattice equations and their growth properties.
\newblock {\em Physics Letters A}, 278(6):319--324, 2001.
\bibitem{volkov1992quantum}
A.~Y. Volkov and L.~D. Faddeev.
\newblock Quantum inverse scattering method on a spacetime lattice.
\newblock {\em Theoretical and Mathematical Physics}, 92(2):837--842, 1992.
\bibitem{WalkerAJ:thesis}
A.~Walker.
\newblock Similarity reductions and integrable lattice equations.
\newblock {\em Ph.D. Thesis, University of Leeds}, 2001.

\end{thebibliography}
\end{document}